\newcommand{\BEQA}{\begin{eqnarray}}
\newcommand{\EEQA}{\end{eqnarray}}
\newcommand{\figSpace}{\vspace{-0.2in}}
\newcommand{\figCaptionSpace}{\vspace{-0.1in}}
\newcommand{\Capprox}{\tilde{C}}
\newcommand{\D}[2]{\frac{\partial #1}{\partial #2}}
\newtheorem{theorem}{Theorem}
\newtheorem{definition}{Definition}
\begin{document}


\title{Multipath Wireless Network Coding: A Population Game Perspective }

\author{\IEEEauthorblockN{Vinith Reddy\IEEEauthorrefmark{1},
Srinivas Shakkottai\IEEEauthorrefmark{1},
Alex Sprintson\IEEEauthorrefmark{1} and
Natarajan Gautam\IEEEauthorrefmark{2}}
\IEEEauthorblockA{\IEEEauthorrefmark{1}Dept. of ECE, Texas A\&M University}
\IEEEauthorblockA{\IEEEauthorrefmark{2}Dept. of ISE, Texas A\&M University}
\IEEEauthorblockA{Email: \{vinith\_reddy, sshakkot, spalex, gautam\}@tamu.edu}
} 

\maketitle

\begin{abstract} 
We consider wireless networks in which multiple paths are available between each source and destination. We allow each source to split traffic among all of its available paths, and ask the question: how do we attain the lowest possible number of transmissions to support a given traffic matrix? Traffic bound in opposite directions over two wireless hops can utilize the ``reverse carpooling'' advantage of network coding in order to decrease the number of transmissions used. We call such coded hops as ``hyper-links''. With the reverse carpooling technique longer paths might be cheaper than shorter ones. However, there is a prisoners dilemma type situation among sources -- the network coding advantage is realized only if there is traffic in both directions of a shared path. We develop a two-level distributed control scheme that decouples user choices from each other by declaring a hyper-link capacity, allowing sources to split their traffic selfishly in a distributed fashion, and then changing the hyper-link capacity based on user actions. We show that such a controller is stable, and verify our analytical insights by simulation.
\end{abstract}

\section{Introduction}\label{sec:intro}

There has recently been significant interest in multihop wireless networks, both as a means for basic Internet access, as well as for building specialized sensor networks.  However, limited wireless spectrum together with interference and fading pose significant challenges for network designers. The technique of network coding has the potential to improve the throughput and reliability of multihop wireless networks by taking advantage of the broadcast nature of wireless medium.

For example, consider a wireless network coding scheme depicted in Figure~\ref{fig:network}(a). In this example, two wireless nodes need to exchange packets $x_1$ and $x_2$ through a relay node.  On the one hand, a simple \emph{store-and-forward} approach needs four transmissions.  On the other hand, the network coding approach uses a  \emph{store-code-and-forward} approach in which the two packets from the clients are combined by means of an XOR operation at the relay and broadcast to both clients simultaneously.  The clients can then decode this coded packet to obtain the packets they need.
\begin{figure}[h]
\begin {center}
\includegraphics[width=1.8in]{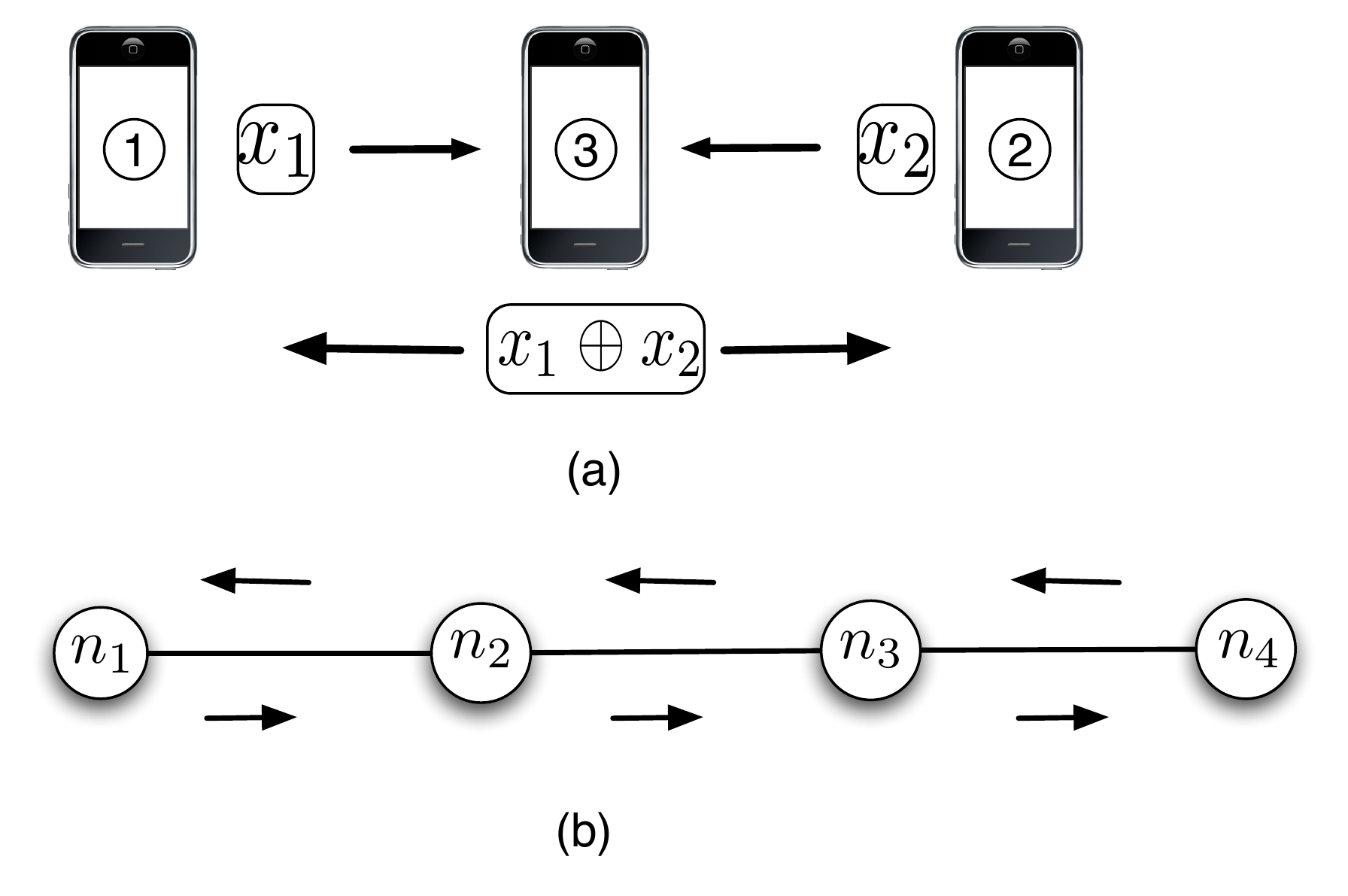}
\caption{(a) Wireless Network Coding (b) Reverse carpooling.
\label{fig:network}}
\vspace{-0.2in}
\end{center}
\end{figure}

Design and analysis of efficient network coding schemes for wireless networks have recently attracted a significant interest from the research community. Katti \emph{et al.} \cite{Our-Sigcomm06} presented a practical network coding architecture, referred to as \emph{COPE}, that implements the above idea while also making use of overheard packets to aid in decoding.  Experimental results shown in \cite{Our-Sigcomm06} indicate that the network coding technique can result in a significant improvement in the network throughput.

Effros \emph{et al.} \cite{1633782} introduced the strategy of \emph{reverse carpooling} that allows two information flows traveling in opposite directions to share a path. Figure~\ref{fig:network}(b) shows an example of two connections, from $n_1$ to $n_4$ and from $n_4$ to $n_1$ that share a common path $(n_1,n_2,n_3,n_4)$. The wireless network coding approach results in a significant (up to 50\%) reduction in the number of transmissions for two connections that use reverse carpooling. In particular, once the first connection is established, the second connection (of the same rate) can be established in the opposite direction with little additional cost.



The key challenge in the design of network coding schemes is to maximize the number of \emph{coding opportunities}, where a coding opportunity refers to an event in which at least one transmission can be saved by transmitting a combination of the packets. Insufficient number of coding opportunities may affect the performance of a network coding scheme and is one of the major barriers in realizing the coding advantage. Accordingly, the goal of this paper is to design, analyze, and validate network mechanisms and protocols that improve the performance of the network coding schemes through increasing the number of coding opportunities.

Consider the scenario depicted in Figure \ref{fig:codingOpportunities}. We have three sources of traffic, each of which is aware of two paths leading to its destination.  For example, Source $3$ (positioned at $n_5$) can send packets to its destination (located at $n_1$) at rates $x^1_3$ and $x^2_3$ on its two available paths.  We consider the cost metric of the system to be the number of transmissions required to support a given traffic matrix.  Under the current channel conditions, suppose that it is cheaper for Source $3$ to send all its traffic on path $(n_5,n_7,n_1)$.  However, notice that there is an opportunity for reverse carpooling on a subpath $(n_1,n_2,n_5)$.  With this scheme, node $n_2$ will broadcast coded packets to nodes $n_1$ and $n_5$. We refer to this broadcast link as a \emph{hyper-link}.  Although path $(n_5,n_2,n_1)$ is more expensive for Source $3$, if there is traffic from Source $1$ (located at $n_1$) that overlaps with it at $n_2$, it might actually be the case that the lowest cost traffic split in the system would entail that Source $3$ should use the hyper-link and send all its traffic using path $(n_5,n_2,n_1)$, while Source $1$ follows suit by using its path $(n_1,n_2,n_5,n_4)$.  

\begin{figure}[tb]
\vspace{-0.1in}
\begin{center}
\includegraphics[width=2.5in]{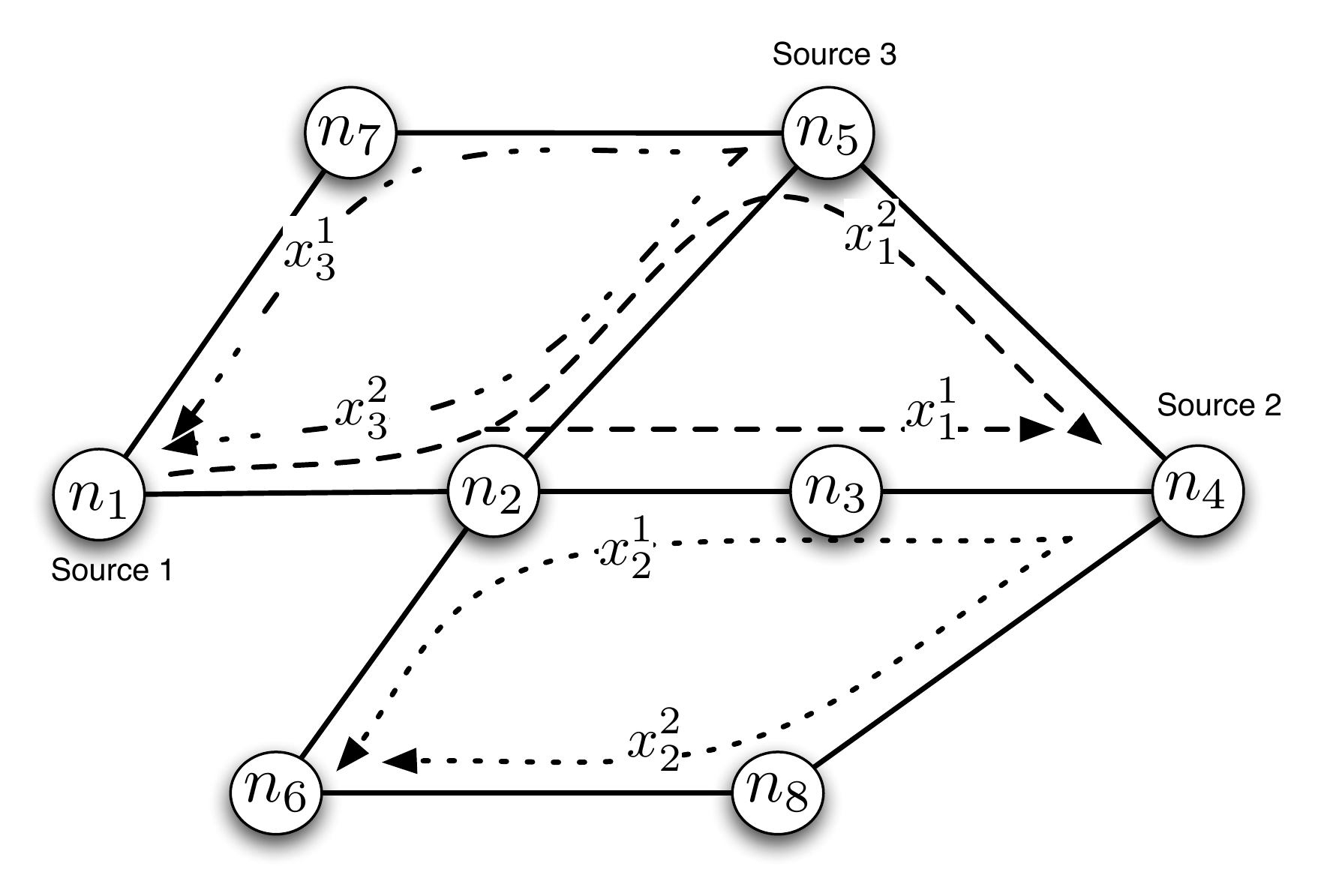}
\figCaptionSpace
\caption{Increasing Coding Opportunities}
\label{fig:codingOpportunities}
\end{center}
\figSpace
\end{figure}

However, we immediately see that there is a prisoners' dilemma situation here -- savings can only be obtained if there is sufficient bi-directional traffic on  $(n_1,n_2,n_5)$.  The first mover in this case is clearly at a disadvantage as it essentially creates the route that others can piggyback upon (in a reverse direction).  Our challenge in this paper is to design and validate a distributed control scheme that provides an incentive to use reverse carpooling, eliminates the first-mover disadvantage, and hence allows the system to attain the state of lowest possible cost to support its traffic.

\subsection{Related Work}\label{sec:related}

Network coding research was initiated by a seminal paper by Ahlswede \emph{et al.} \cite{ACLY00} and since then attracted a significant interest from the research community. Many initial works on the network coding technique focused on establishing \emph{multicast} connections between a fixed source and a set of terminal nodes. Li \emph{et al.} \cite{LYC03} showed that  the maximum rate of a multicast connection is equal to the minimum capacity of a cut that separates the source and any terminal.  In a subsequent work, Koetter and M\'{e}dard \cite{KM03} developed an algebraic framework for network coding and investigated linear network codes for directed graphs with cycles.

Network coding technique for wireless networks has been considered by Katabi \emph{et al.} \cite{Our-Sigcomm06}. The proposed architecture, referred to as COPE, contains a special network coding layer between the IP and MAC layers. In \cite{1282400} Chachulski  \emph{et al.} proposed an opportunistic routing protocol, referred to as MORE, that randomly mixes packets that belong to the same flow before forwarding them to the next hop.  Sagduyu and Ephremides \cite{4439862} focused on the applications of network coding in simple path topologies (referred to in \cite{4439862} as \emph{tandem} networks) and formulate a related cross-layer optimization problems.

Closest to our problem are \cite{das2008context,marden2009price}.   Das \emph{et al.} \cite{das2008context} propose a new framework called ``context based routing'' in multihop wireless networks that enables sources to choose routes that increase coding opportunities.  They propose a heuristic algorithm that measures the imbalance between flows in opposite directions, and if this imbalance is greater than $25\%$, provides a discount of $25\%$ to the smaller flow.  This has the effect of incentivizing equal bidirectional flows, resulting in multiple coding opportunities.  Our objective is similar, but we develop iterated distributed decision making that trades off a potential increase in cost of longer paths, with the potential cost reduction due to enhanced coding opportunities.  Marden \emph{et al.} \cite{marden2009price} consider a similar problem to ours, but unlike our focus on how to align user incentives, their focus is on the efficiency loss of the Nash equilibrium attained.  Our objective is to design an incentive structure that would naturally result in the system converging to the lowest cost state.

\subsection{Main Results}\label{sec:results}
The key contribution of this research is a distributed two-level control 
scheme that would iteratively try to lead the sources to discover the appropriate 
splits for their traffic among multiple paths. On one level are the
sources that selfishly choose to split their traffic across available 
multiple paths with costs and maximum capacities (set by the hyperlinks)
on each. On the other level,
the hyperlink nodes choose maximum capacities for paths that share that node
as a result of the sources' decisions as well as opportunities for network
coding.  Note by splitting up the dynamics in this fashion, our algorithm is a relaxation of the original cost-minimization problem.
The iteration process continues until the entire network has reached local minimum which, since our formulation is convex, is also the socially optimal solution.
We show that this process is asymptotically stable. We illustrate our approach as well as the quality of solution
using numerical experiments. The experiments indicate that: the convergence is fast; the costs are reduced
significantly upon using network coding; more expensive paths before 
network coding became cheaper and shortest paths were not necessarily
optimal.  Thus, the iterative algorithm that we develop from the relaxed formulation performs well in practice.  

\section{System Overview}\label{sec:overview}

Our objective is to design a distributed multi-path network coding system for multiple unicast flows traversing a shared wireless network.  We assume that the schedule of wireless links given to us (for example, using CSMA), and hence abstract out the interference between links.  We model the communication network as a graph $G(V,E)$, where $V$ is the set of network nodes and $E$ is the set of wireless links. For each link $(n_i,n_j)\in E$, where $(n_i,n_j)\in V$, there exists a wireless channel that allows node $n_i$ to transmit information to node $n_j$.  Each link $(n_i,n_j)$ is associated with a cost $\alpha_{ij}$. The value of $\alpha_{ij}$ captures the cost (number of transmissions) of transmitting information at unit rate from $n_i$ to $n_j$.  Due to a broadcast nature of the wireless channels, the node $n_i$ can transmit to two neighbors $n_j$ and $n_k$ simultaneously at a cost $\max\{\alpha_{ij},\alpha_{ik}\}$. 

We assume that the network supports flows $\{1,2,\dots,\}$, where each flow is associated with a source and destination node.  Each flow $i$ is also associated with several paths $\{P_i^1,P_i^2,\dots\}$ that connect its source and destination nodes.  Our goal is to build a distributed traffic management scheme in which the source node of each flow $i$ can split its traffic, $x_i$, among multiple different paths, so as to reduce the \emph{total number of transmissions, per unit rate,} required to support a given traffic.  Note that on some of these paths there might be a possibility of network coding. 

For example, consider the network depicted on Figure~\ref{fig:codingOpportunities}. The network supports three flows: (i) flow $1$ from $n_1$ to $n_4$, (ii) flow $2$ from $n_4$ to $n_6$, and (iii) flow  $3$ from $n_5$ to $n_1$.  We denote by $x_i$ the traffic associated with flow $i$, $1\leq i\leq 3$. Suppose that the packets that belong to flow $1$ can be sent over two paths $(n_1,n_2,n_3,n_4)$ and $(n_1,n_2,n_5,n_4)$. We denote these paths by $P_1^1$ and $P_1^2$. The traffic split on paths $P_1^1$ and $P_1^2$ is given by $x_1^1$, $x_1^2$, respectively, such that $x_1^1 + x_1^2 = x_1$. Similarly, flow $2$ can be sent over two paths $P_2^1 = (n_4,n_3,n_2,n_6)$ and $P_2^2 = (n_4,n_8,n_6)$ at rates $x_2^1$ and $x_2^2$, such that $x_2^1 + x_2^2 = x_2$. Finally, flow $3$ can be sent over two paths $P_3^1 = (n_5,n_7,n_1)$ and $P_3^2 = (n_5,n_2,n_1)$, at rates $x_3^1$ and $x_3^2$, with sum $x_3$. 

Note that path $P_1^2 = (n_1,n_2,n_5,n_4)$ of flow $1$ and path $P_3^2 = (n_5,n_2,n_1)$ of flow $3$  share two links $(n_1,n_2)$ and $(n_2,n_5)$ in the opposite directions. Thus, the packets sent along these two paths can benefit from reverse carpooling. Specifically, node $n_2$ can combine packets of flow $1$ received from node $n_1$ and packets of flow $3$ received from node $n_5$. Similarly, node $n_3$ can combine packets of flow $1$ received from node $n_2$ and packets of flow $2$ received from node $n_4$. Note that the cost saving at node $n_2$ is proportional to $\min\{x_1^2,x_3^2\}$, while the saving at node $n_3$ is proportional to $\min\{x_1^1,x_2^1\}$.  Note that our model is not restricted to reverse carpooling type XOR coding alone. Other types of XOR coding schemes like COPE\cite{cope06}, which uses ``opportunistic listening'' can also be used.

The cost at node $n_2$ when coding is enabled is
\BEQA\label{eqn:codingCost}
C_{n_2}(x_1^2,x_3^2) =& \max\{\alpha_{21},\alpha_{25}\}\min\{x_1^2,x_3^2\} \\
&+ \alpha_{25}(x_1^2 - \min\{x_1^2,x_3^2\}) \nonumber\\
&+ \alpha_{21}(x_3^2 - \min\{x_1^2,x_3^2\}). \nonumber
\EEQA
Here, the first term on the right is the cost incurred due to coding at node $n_2.$   This is because a coded packet from $n_2$ is broadcast to both destination nodes, $n_1$ and $n_5$, and so the cost per unit rate is $\max\{\alpha_{21},\alpha_{25}\}$. The second and third term are ``overflow'' terms. Since  its is possible that $x_1^2\neq x_3^2,$ the remaining flow of the larger (that cannot be encoded because of the lack of flow in the opposite direction) is sent without coding at the regular link cost. 

The cost at node $n_2$, given by (\ref{eqn:codingCost}), can be re-written as shown below:
\BEQA
C_{n_2}(x_1^2,x_3^2) &= \alpha_{25}x_1^2 &+ \alpha_{21}x_2^2 + \Big\{ \max\{\alpha_{21},\alpha_{25}\} \nonumber \\
&&-(\alpha_{21} + \alpha_{25}) \Big\}\min\{x_1^2,x_3^2\}. \nonumber
\EEQA
Using the fact that $\max\{x_1,x_2\} + \min\{x_1,x_2\} = x_1 + x_2$, we obtain
\BEQA\label{eqn:savings}
C_{n_2}(x_1^2,x_3^2) &=&  \alpha_{25}x_1^2 + \alpha_{21}x_2^2\\
 &-& \min\{\alpha_{21},\alpha_{25}\}\min\{x_1^2,x_3^2\}. \nonumber
\EEQA
The above equation can be interpreted as the cost at node $n_2$ without coding minus the savings obtained when coding is used. Thus, the cost saved at node $n_2$ due to network coding is $\min\{\alpha_{21},\alpha_{25}\}\min\{x_1^2,x_3^2\}$ . Similarly, for node $n_3$ the cost saved is $\min\{\alpha_{32},\alpha_{34}\}\min\{x_1^1,x_2^1\}.$
The total system cost can be expressed as:
\BEQA\label{equ:energy}
C(X) = \sum_{i=1}^3\sum_{j=1}^2\beta_i^jx_i^j\ -&\min\{\alpha_{21},\alpha_{25}\}\min\{x_1^2,x_3^2\} \\
-&\min\{\alpha_{32},\alpha_{34}\}\min\{x_1^1,x_2^1\}, \nonumber
\EEQA
where $X=\{x^1_1,x^2_1,x_2^1,x_2^2,x_3^1,x_3^2\}$ is the state of the system and $\beta_i^j$ is the uncoded path cost (equal to the sum of the link costs on the path) $j$ used by flow $i$. For example, $\beta^1_1 = \alpha_{12}+ \alpha_{23} + \alpha_{34}$, for path $P_1^1 = (n_1,n_2,n_3,n_4)$. Thus, the first term on the right in (\ref{equ:energy}) is the total cost of the system without any coding, while the second and third terms are the savings obtained by coding at nodes $n_2$ and $n_3.$

In this paper, we consider the problem of minimizing total cost, given the traffic matrix.  The problem poses major challenges due to the need to achieve a certain degree of coordination among the flows. For example, for the network depicted in Figure~\ref{fig:codingOpportunities}, increasing of the value of $x_3^2$ (the decision made by node $n_5$) will result in a system-wide cost reduction only if it is accompanied by the increase in the value of $x_1^2.$

\section{Hyper-links and System Cost}\label{sec:hyper}

In order to decouple the decisions of flows, we introduce the idea of a \emph{hyper-link} whose capacity can be controlled independently of the flows that use it.
\begin{definition}\label{def:hyper}
A \emph{hyper-link} is a broadcast-link composed of three nodes and two flows.  A hyper-link $n_k[(i,p,n_i),(j,q,n_j))]$ at node $n_k$ can encode packets belonging to flow $i$ (sending packets on path $p$) with flow $j$ (sending packets on path $q$). Here, nodes $n_i$ and $n_j$ are the next-hop neighbors of $n_k$; for flow $i$ along path $p$ and for flow $j$ along path $q$, respectively. 
\end{definition}
For each hyper-link $n_k[(i,p,n_i),(j,q,n_j))]$, we introduce a new decision variable $y_k$ that denotes the capacity of the hyper-link. This formulation helps us to decouple the coordination between individual flows.  We restrict the total coded (broadcast) traffic between the two flows at node $n_k$ to be at-most equal to the hyper-link capacity $y_k$.  Any remaining flow is sent without coding.  Referring to Figure~\ref{fig:codingOpportunities}, there exists a hyper-link $h_1 = n_2[(1,P_1^2,n_5),(3,P_3^2,n_1)]$, where the source node $n_2$ can encode packets of flow $f_1^2$ (flow along path $P_1^2$), destined to node $n_3$, with packets of flow $f_3^2$ (flow along path $P_3^2)$, destined to node $n_1$. Similarly, there exists a hyper-link 
$h_2 = n_3[(1,P_1^1,n_4),(2,P_2^1,n_2)]$, where the source node $n_3$ can encode packets of flow $f_1^1$, destined to node $n_4$, with packets of flow $f_2^1$ destined to node $n_2$. Let the hyper-link capacities be defined as $y_2$ and $y_3$ respectively.
The total cost of transmission on hyper-link  $h_1 = n_2[(1,P_1^2,n_5),(3,P_3^2,n_1)]$ of capacity $y_2$ is given by
\BEQA
C(h_1) &= \max\{\alpha_{25},\alpha_{21}\}y_2 +\label{eqn:hlCodingCost}\\
&  \alpha_{25}(x_1^2 - \min\{x_1^2,y_2\} + \nonumber \\
&  \alpha_{21}(x_3^2 - \min\{x_3^2,y_2\}, \nonumber
\EEQA
where the first term on the right is the cost of sending traffic on the hyper-link. Note that we have to bear this cost, regardless of whether or not there is enough bidirectional flow to be sent on the hyper-link.  This relaxation could potentially increase the total cost of the system.  However, as we will see in Section \ref{sec:hlCtrl}, we can design a hyper-link capacity controller which would adjust the hyper-link capacities periodically to minimize cost.  As before, the ``overflow'' packets are sent without coding, and the cost incurred in doing so is given by the latter two terms.

The cost at node $n_2$, given by (\ref{eqn:hlCodingCost}), can be re-written as:
\BEQA\nonumber
C(h_1) =& \alpha_{25}x_1^2 + \alpha_{21}x_3^2 - T(h_1)\mbox{, where } \hspace{0.5in}\\
T(h_1) =& \alpha_{25}\min\{x_1^2,y_2\} + \alpha_{21}\min\{x_3^2,y_2\} \nonumber\\
 &-\ \max\{\alpha_{25},\alpha_{21}\}y_2\nonumber
\EEQA
Recall that the first two cost terms are the total cost at node $n_2$ when coding is disabled. The remaining cost, $T(h_1),$ can be thought of as the \emph{rebate} obtained by using hyper-link $h_1 = n_2[(1,P_1^2,n_5),(3,P_3^2,n_1)]$. Note that the rebate could be \emph{negative} (hence adding to the total cost), which might happen when one of the flow's rate is $0$ and the other flow's rate is less than the hyper-link capacity. 

Thus, the modified cost function when the system is in state $(X,Y)$ is given by
\BEQA\label{equ:energy1}
C(X,Y) = \sum_{i=1}^3\sum_{j=1}^2\beta_i^jx_i^j - (T(h_1) + T(h_2)),
\EEQA
where $X=\{x^1_1,x^2_1,x_2^1,x_2^2,x_3^1,x_3^2\}$,  $Y =\{y_2,y_3\}$. $T(h_1)$ and  $T(h_2)$ are the rebates obtained by using hyper-link $h_1 = n_2[(1,P_1^2,n_5),(3,P_3^2,n_1)]$ and $h_2 = n_3[(1,P_1^1,n_4),(2,P_2^1,n_2)])$, respectively.
 
In general, the total system cost in terms of number of transmissions required to support a given traffic load, when the state of the system is $(X,Y)$, is:
\BEQA\label{eqn:cost}
C(X,Y) =& \mbox{Total system cost without coding} \nonumber  \\
       -& \mbox{Total rebate of all the hyper-links}
\EEQA

We focus on minimizing this total cost. To this end, we relax the problem into two sub-problems--that of traffic splitting by sources, and that of hyper-link capacity selection:  
\begin{enumerate}
\item {\bf Traffic Splitting:} In this phase, the source node of each flow splits its traffic among the different options, for a given hyper-link state $Y$. The options available to each flow are called \emph{hyper-paths}, where each such hyper-path contains zero or more hyper-links.  We model this phase as a potential game; the background needed is presented in Section \ref{sec:background}. 
Details of our game model and the payoffs used are covered in Section~\ref{sec:mpncGame}.
\item {\bf Hyper-Link Capacity Control:} In this phase, we adjust the hyper-link capacities in order to minimize the total cost. We use a simple gradient descent controller to attain minimum cost.  In this phase it is assumed the sources attain Wardrop equilibrium instantaneously. Further details on the type of controller used and the convergence properties are covered in Section~\ref{sec:hlCtrl}.
\end{enumerate}
We call our controller as \emph{Decoupled Dynamics}.  The two phases operate at different time scales. Traffic splitting is done at every \emph{small} time scale and the hyper-link capacity control is done at every \emph{large} time scale. Thus, sources attain equilibrium for a given hyper-link capacities, then the hyper-link capacities are adjusted, and this in turn forces the sources to change their splits. This process continues until the source splits and hyper-link capacities converge.

\section{Background: Potential Games}\label{sec:background}

Below we review some game-theoretic ideas that will be used in this paper.  Detailed discussion may be found in \cite{Sandholm01}. A \emph{population game} $\mathcal{G}$, with $F$ non-atomic populations of players is defined by a mass and a strategy set for each population and a payoff function for each strategy.  By a non-atomic population, we mean that the contribution of each member of the population is infinitesimal. We denote the set of populations by $\mathcal{F}=\{1, ... ,F\}$, where $F\geq 1$.  The population $i$  has mass $x_i$. The set of strategies for population $i$ is denoted $\mathcal{S}_i = \{1, ... , S_i \}$.  These strategies
can be thought of as the actions that members of $i$ could possibly take. A particular strategy distribution is the way the population $i$ partitions itself into the different actions available, i.e., a strategy distribution for $i$ is vector of the form $\vec{x}_i=\{x^1_i, x^2_i,...x^{S_i}_i\}$, where
$\sum_{p=1}^{S_i}{x^p_i}=x_i$. The set of strategy distributions of a population $i\in \mathcal{F}$, is denoted by $X_i =\{\vec{x}_i\in \mathbb{R}^{S_i}_+:\sum_{p=1}^{S_i}{x^p_i}=x_i\}$.   We denote the vector of strategy distributions being used by the entire population by ${\bf X}=\{\vec{x}_1,\vec{x}_2, ... , \vec{x}_F\}$, where $\vec{x}_i\in X_i$.  The vector ${\bf X}$ can be thought of as the state of the system. Let the  space of all strategy distributions be $\mathcal{X}$.

The marginal payoff function (per unit mass) obtained from strategy $p\in \mathcal{S}_i$ by users of class $i$, when the state of the system is ${\bf X}$ is denoted by $F^p_i({\bf X}) \in \mathbb{R}$ and is assumed to be continuous and differentiable. Note that the payoffs to a strategy in population $i$ can depend
on the strategy distribution within population $i$ itself.  The total payoff to users of class $i$ is then given by $\sum_{p=1}^{S_i}F^p_i({\bf X})x^p_i$, where we assume linearity for exposition. 

\emph{Potential games} are a type of population games, that have a specific structure on the cost function. The idea behind potential games is to identify a scalar function that represents the ``energy'' of the system (exactly like a Lyapunov function \cite{khalil96}), which is called the \emph{potential function}. All information regarding the payoffs obtained by users of a population class can be captured in the potential function.
\begin{definition}\label{defn:potentialGame}
Let $\mathcal{G}$ be a population game with payoff function (per unit mass) $F:\mathcal{X} \rightarrow \mathcal{R}^F$. $\mathcal{G}$ is called a \emph{Potential Game} if there exists a continuously differentiable function $\mathcal{T}:\mathcal{X} \rightarrow \mathcal{R}$ such that 
\BEQA
\D{\mathcal{T}}{x^p_i}(X) = F^p_i(X)
\EEQA
$\forall i \in \mathcal{F}$ and $p \in \mathcal{S}_i$, where $X \in \mathcal{X}$ is the state of the system. 
The function $\mathcal{T}$ is called the \emph{potential function} for game $\mathcal{G}$. 
\end{definition} 

Next, we define the concept of equilibrium in population games. A commonly used concept in non-cooperative games in the context of infinitesimal players, is the Wardrop equilibrium \cite{Wardrop52}. Consider any strategy distribution $\vec{x}_i=[x^1_i,...,x^{S_i}_i]$.  There would be some elements which are non-zero
and others which are zero.  We call the strategies corresponding to the non-zero elements as the \emph{strategies used by
population $i$}.
\begin{definition}\label{defn:wardrop}
A state ${\bf \hat{X}}$ is a Wardrop equilibrium if for any population $i \in \mathcal{F}$, all strategies being used by the members of $i$ yield the same marginal payoff to each member of $i$, whereas the marginal payoff that would be obtained is lower for all strategies not used by population $i$.

Let $\mathcal{\hat{S}}_i\subset \mathcal{S}_i$ be the set of all strategies used by population $i$ in a strategy
distribution ${\bf \hat{X}}$.  A Wardrop equilibrium ${\bf \hat{X}}$ is then characterized by the following relation:
\BEQA
F_i^s( {\bf \hat{X}} ) \geq F_i^{s'} ( {\bf \hat{X}})\ \ \forall s\in \mathcal{\hat{S}}_i\mbox{ and } s'\in \mathcal{S}_i
\nonumber
\EEQA
\end{definition}

 The above concept refers to an \emph{equilibrium condition}; the question arises as to how the system actually arrives at such a state.  
A commonly used kind of population dynamics is \emph{Brown-von Neumann-Nash (BNN) Dynamics}~\cite{broneu50}.  The dynamics are described as follows: 
\BEQA\label{eqn:origbnn}
\dot{x}_i^p  &=& \left(x_i\gamma_i^p - x^p_i\sum_{j=1}^{S_i}\gamma_i^j \right) \\
\mbox{ where, } 
\gamma_i^p &=& \max\left\{ F_i^p - \frac{1}{x_i}\sum_{j=1}^{S_i}F_i^jx_i^j,0 \right\} \nonumber
\EEQA
Note that the total mass of the population $i$ is a constant $x_i$.  An interesting property of BNN dynamics is non-complacency, i.e., it allows extinct strategies to resurface, so that its stationary points are always Wardrop equilibria~\cite{Sandholm01}.

\section{Traffic Splitting: \\Multi-path Network Coding (MPNC) Game}\label{sec:mpncGame}

We model the traffic-splitting process of our Decoupled Dynamics controller as a potential game, $\mathcal{G}$, which we refer to as the \emph{Multi-Path Network Coding Game} (MPNC Game).  Our system model consists of a set of nodes $\mathcal{N}= \{n_1,\dots,n_N\}$, where each node $n_i \in \mathcal{N}$ is surrounded by a random number of other nodes. The cost of transmission (per unit rate) from node $n_i$ to its neighboring node $n_j$ is a constant and is equal to $\alpha_{ij}$, similarly, cost of transmission (per unit rate) from $n_j$ to $n_i$ is $\alpha_{ji}$.  There exists a set of flows (these correspond to players in the game) $\mathcal{F} = \{1,\dots,F\}$.  Each flow, $i \in \mathcal{F}$ is defined as a tuple $(n^{s}_i,n^{d}_i,x_i)$, where  $n^{s}_i \in \mathcal{N}$  is the source node,  $n^{d}_i \in \mathcal{N}$ is the destination node, and $x_i$ {\it packets/sec} is the traffic sent from source to destination.  This traffic is equivalent to the population mass in the population game interpretation.  Each flow $i$ is associated with a set of hyper-paths $\mathcal{S}_i.$
\begin{definition}
A \emph{hyper-path} $p\in \mathcal{S}_i$ between source $n^{s}_i$ and destination $n^{d}_i$ is a virtual path over a physical path between $n^{s}_i$ and $n^{d}_i$. A hyper-path contains zero or more hyper-links on it and at \emph{each node} on the underlying physical path there can be at-most one hyper-link.  It follows that the set of all paths are a subset of the hyper-paths.      
\end{definition}  
In other words, a hyper-path can have a combination of at-most two flows at each node.  A flow can split its traffic among the hyper-paths available to it, and we denote a \emph{sub-flow} $f^p_i$ of flow $i$ by the tuple $(n^{s}_i,n^{d}_i,p,x^p_i)$.  Here, $x^p_i$ is the traffic sent by flow $i$ on hyper-path $p$.  The sum of link costs (per unit rate) on the physical path corresponding to the hyper-path is denoted $\beta_i^p$.  Note that the cost seen by a sub-flow using such a hyper-path might be lower than this cost due to saving attained by network coding.

We represent the division of traffic $x_i$ of flow $i \in \mathcal{F}$, over all the hyper-paths $p \in \mathcal{S}_i$ as a vector, $\vec{x}_i = \{x_i^1,\dots,x_i^{S_i}\}$ such that $\sum_{p=1}^{S_i}x_i^p = x_i$.  $\vec{x}_i$ is called the strategy distribution of flow $i$, and the set of all the strategy distributions of all the flows is called the state of the flows and is represented as  $X = [\vec{x}_1 \dots \vec{x}_F]$. We denote the set of all states of the system as $\mathbb X$, i.e., $X \in \mathbb X$.

The set of all hyper-links in the network is assumed to be pre-determined and is represented by $\mathcal{H} = \{1,\dots,H\}$, where $H$ is the number of hyper-links.  Recall that the hyper-link formed by encoding packets that belong to flows $i$ and $j$, for $i,j \in \mathcal{F}$ at node $n_k$ is represented by $n_k[(i,p,n_i),(j,q,n_j)]$. Nodes $n_i$ and $n_j$ are the next hop nodes for the hyper-path of flows $i$ and $j$, using hyper-paths $p$ and $q$ respectively. Note that we have slightly modified the definition to include the fact that $i$ and $j$ are using hyper-paths.  We denote by $\mathcal{H}_i^p \subseteq \mathcal{H}$ the set of all hyper-links associated with flow $f_i^p$. 

Each hyper-link can choose its \emph{capacity} independently of others.  We denote the capacity of the hyper-link $h=n_k[(i,p,n_i),(j,q,n_j)]$ by $y_h$ {\it packets/sec}.  The hyper-link broadcasts packets received at node $n_k$ to $n_i$ and $n_j$ up to capacity $y_h$.  The vector of all hyper-link capacities is called the hyper-link state and is denoted by, $Y = [y_1,\dots,y_H]$. Let $\mathbb Y$ be the set of all possible hyper-link states, i.e., $Y \in \mathbb Y$. The state of the system is defined as $(X,Y)$, where $X \in \mathbb X$ is the state of the flows and $Y\in \mathbb Y$ is the state of the hyper-links.

In the traffic splitting phase of our algorithm, flows try to attain the state of lowest cost for a given hyper-link 
state $Y$. The hyper-link capacities are controlled in the next phase (hyper-link capacity control), discussed in 
Section~\ref{sec:hlCtrl}

The payoff (per unit rate) obtained in using hyper-path $p \in \mathcal{S}_i$ of flow $i \in \mathcal{F}$ when the state of the system is $(X,Y)$ is denoted by $F_i^p(X,Y) \in \mathbb R$ and is assumed to be continuous and differentiable. We may have to make suitable approximations on cost functions to ensure that these conditions hold.  We model our system as a potential game, using the total cost function $C(X,Y)$ as our potential function. Recall from (\ref{eqn:cost}) that the total cost of the system is
\BEQA\label{eqn:totCost}
C(X,Y) =& \sum_{i=1}^F\sum_{p=1}^{S_i}\beta_i^px_i^p - \sum_{h=1}^H T(h),
\EEQA
where
\BEQA\label{eqn:hlTax}
T(h) &= \alpha_{ki}\min\{x_i^p,y_h\} + \alpha_{kj}\min\{x_j^q,y_h\} \nonumber \\
&- \max\{\alpha_{ki},\alpha_{kj}\} y_h
\EEQA
As can be seen from (\ref{eqn:hlTax}), the cost function contains ``min'' terms over the hyper-link capacity and the flow rates, this makes the function non-continuous and non-differentiable. In order to have a continuously differentiable cost function we approximate these ``min'' terms using a generalized mean-valued function.

Let $a = \{a_1,\dots,a_n\}$ be the set of positive real numbers and let $r$ be some non-zero real number. Then the generalized \emph{r-mean} of $a$ is given by:
\BEQA\label{eqn:rMean}
M_r(a) = \left( \frac{1}{n} \sum_{i=1}^n  a_i^r\right)
\EEQA
The ``min'' function over the set $a$ is approximated using $M_r(a)$ as:
\BEQA\label{eqn:minApprox}
\min\{a_1,\dots,a_n\} = \lim_{r \rightarrow -\infty} M_r(a)
\EEQA
Substituting for $M_r$ (\ref{eqn:rMean}, instead of the ``min'' function in (\ref{eqn:totCost} we get the approximated total cost function as:
\BEQA\label{eqn:Capprox}
\Capprox(X,Y) =& \sum_{i=1}^F\sum_{p=1}^{S_i}\beta_i^px_i^p - \sum_{h=1}^H \tilde{T}(h),
\EEQA
where for a hyper-link $h=n_k[(i,p,n_i),(j,q,n_j)] \in \mathcal{H}$:
\BEQA
\tilde{T}(h) &= \alpha_{ki}\left(\frac{(x_i^p)^r +(y_h)^r}{2}\right)^{\frac{1}{r}} \label{eqn:Tapprox}
+  \alpha_{kj}\left(\frac{(x_i^p)^r +(y_h)^r}{2}\right)^{\frac{1}{r}} \nonumber \\
& - \max\{\alpha_{ki},\alpha_{kj}\} y_h 
\EEQA

The cost function $\Capprox(X,Y)$ is continuous and differentiable. So, we use the approximated cost function as our potential function. Thus, it follows from the definition of potential games (~\ref{defn:potentialGame}) that, the payoff obtained by flow $i \in \mathcal{F}$ in using option $p \in \mathcal{S}_i$ is:
\BEQA 
F_i^p(X,Y) &= \D{\Capprox(X,Y)}{x_i^p} \ \forall i \in \mathcal{F},\ p \in \mathcal{S}_i \label{eqn:F_potential} \\
&= \beta_i^p - \sum_{h \in \mathcal{H}_i^p}\D{\tilde{T}(h)}{x_i^p}, 
\EEQA
where, from (\ref{eqn:Tapprox})
\BEQA
 \D{\tilde{T}(h)}{x_i^p} = \frac{\alpha_{ki}}{2}\left(\frac{x_i^p}{M_r(x_i^p,y_h)}\right)^{r-1} 
\EEQA
Recall that
\BEQA
M_r(x_i^p,y_h) = \left(\frac{(x_i^p)^r +(y_h)^r}{2}\right)^{\frac{1}{r}}.
\EEQA
Hence,
\BEQA\label{eqn:F_i_j}
F_i^p(X,Y) &= \beta_i^p - \sum_{h \in \mathcal{H}_i^p}\frac{\alpha_{ki}}{2}\left(\frac{x_i^p}{M_r(x_i^p,y_h)}\right)^{r-1},
\EEQA
where $\mathcal{H}_i^p$ is the set of all hyper-links associated with sub-flow $f_i^p$. Note, the payoff is the cost incurred in using an option, so the players try to minimize their cost.
The source node of each flow, $i \in \mathcal{F}$, observes the marginal cost, $F_i^p$, obtained in using a particular option, $p \in \mathcal{S}_i$, and changes the mass on that particular option,$x_i^p$, so as to attain Wardrop equilibrium~\cite{Wardrop52}. The source nodes use BNN dynamics (\ref{eqn:origbnn}) to control the mass on each option. But since each source tries to \emph{minimize} its payoff, we use a modified version BNN dynamics:
\BEQA\label{eqn:bnn}
\dot{x}_f^p  = \left(x_f\gamma_f^p - x^p_f\sum_{j=1}^{S_f}\gamma_f^j \right),\\ 
\mbox{where, }\gamma_f^p = \max\left\{\frac{1}{x_f}\sum_{j=1}^{S_f}F_f^jx_f^j - F_f^p ,0 \right\} \nonumber
\EEQA
In  the next section, we prove the stability of our system using Lyapunov theory.



\subsection{Convergence of MPNC Game}

We show in this section that the multi-path network coding game converges to a stationary point when each source uses BNN dynamics.  We will use the theory of Lyapunov functions~\cite{khalil96} to show that our population game $\mathcal{G}$, is stable for a given hyper-link state $\hat{Y}$.  We use the approximated total cost of the system~(\ref{eqn:Capprox}) as our Lyapunov function.
\begin{theorem} \label{thm:bnnStable}
The system of flows $\mathcal{F}$ that use \emph{BNN dynamics} with payoffs given by (\ref{eqn:F_i_j}) is globally asymptotically stable for a given hyper-link state $\hat{Y}$.
\end{theorem}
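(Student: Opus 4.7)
The plan is to use the approximated potential function $\tilde{C}(X,\hat{Y})$, with the hyper-link state fixed at $\hat{Y}$, as the Lyapunov function. By construction (see (\ref{eqn:F_potential})) we have $\partial \tilde{C}/\partial x_i^p = F_i^p$, so the only task is to show that BNN dynamics drive $\tilde{C}$ strictly downward off the Wardrop equilibrium set. Define $V(X) := \tilde{C}(X,\hat{Y}) - \min_{X \in \mathbb{X}} \tilde{C}(X,\hat{Y})$; this minimum is attained because $\mathbb{X}$ is a Cartesian product of simplices (hence compact) and $\tilde{C}(\cdot,\hat{Y})$ is continuous, so $V \ge 0$, with $V = 0$ exactly at global minimizers of the potential. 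The population constraints $\sum_p x_i^p = x_i$ and the non-negativity of the $x_i^p$ are preserved by the BNN vector field in (\ref{eqn:bnn}), so $\mathbb{X}$ is forward-invariant and it suffices to reason on this compact set.

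Next I would differentiate $V$ along trajectories of (\ref{eqn:bnn}). Using the potential identity and $\sum_p x_i^p = x_i$, a direct calculation yields $\dot{V} = \sum_{i,p} F_i^p \dot{x}_i^p = -\sum_i x_i \sum_p (\bar{F}_i - F_i^p)\,\gamma_i^p$, where $\bar{F}_i := x_i^{-1}\sum_p F_i^p x_i^p$ is the average cost in population $i$. Because $\gamma_i^p = \max\{\bar{F}_i - F_i^p,\,0\}$, each summand $(\bar{F}_i - F_i^p)\,\gamma_i^p$ equals either $(\bar{F}_i - F_i^p)^2$ or $0$ and is therefore non-negative, so $\dot{V} \le 0$. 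Equality forces $\gamma_i^p = 0$ for all $i,p$, i.e., $F_i^p \ge \bar{F}_i$, which in turn forces $F_i^p = \bar{F}_i$ on every strategy that carries positive mass, exactly the Wardrop equilibrium condition of Definition \ref{defn:wardrop}.

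Global asymptotic stability then follows from LaSalle's invariance principle on the compact forward-invariant set $\mathbb{X}$: trajectories converge to the largest invariant subset of $\{\dot{V} = 0\}$, which the above identifies with the set of Wardrop equilibria of the MPNC game at $\hat{Y}$. Non-complacency of BNN (noted after (\ref{eqn:origbnn})) guarantees that rest points of the vector field are genuine Wardrop equilibria and not spurious boundary fixed points that happen to zero the dynamics while violating the payoff inequality.

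The main obstacle I foresee is not the Lyapunov algebra---which is the standard BNN--potential manipulation---but verifying enough regularity of $\tilde{C}(\cdot,\hat{Y})$ on the boundary of $\mathbb{X}$ to justify the chain rule and the potential identity everywhere the trajectory may visit. The generalized mean $M_r$ approximating $\min$ in (\ref{eqn:minApprox}) is $C^1$ only for finite $r \ne 0$, and care is needed at points where some $x_i^p$ or $y_h$ is zero; I would handle this by checking directly that $\partial \tilde{T}(h)/\partial x_i^p$ in (\ref{eqn:F_i_j}) extends continuously to such boundary points for the chosen $r$. Note that strict convexity of $\tilde{C}(\cdot,\hat{Y})$ is not obvious, so in general the LaSalle conclusion yields convergence to the Wardrop equilibrium \emph{set} rather than to a single point, which matches the statement of Theorem \ref{thm:bnnStable}.
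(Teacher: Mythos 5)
Your proposal follows essentially the same route as the paper: take the approximated potential $\tilde{C}(\cdot,\hat{Y})$ as the Lyapunov function, use the potential identity $\partial\tilde{C}/\partial x_i^p = F_i^p$ to write $\dot{V}=\sum_{i,p}F_i^p\dot{x}_i^p$, substitute the (minimization-modified) BNN dynamics, and observe that the result collapses to $-\sum_i x_i\sum_p(\gamma_i^p)^2\le 0$ with equality exactly at BNN rest points. Your version is in fact somewhat more careful than the paper's --- shifting by the minimum to guarantee non-negativity, invoking LaSalle on the compact forward-invariant simplex product, and flagging the boundary-regularity of $M_r$ and convergence to the equilibrium \emph{set} rather than a point --- but the underlying argument is the same.
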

\begin{proof}
\newcommand{\XYstate}{\tilde{\mathcal{X}}_{\hat{Y}}}
\newcommand{\Lfn}{\mathcal{L}_{\hat{Y}}}
\newcommand{\Ldot}{\dot{\mathcal{L}}_{\hat{Y}}}
We use the approximated total cost function $\Capprox(X,Y)$ (\ref{eqn:Capprox}) as our Lyapunov function.  It is simple to verify that the cost function $\Capprox(X,\hat{Y})$, is non-negative is convex, and hence is a valid candidate.   For a given hyper-link state, $\hat{Y}$, we define our Lyapunov function as:
\BEQA\label{eqn:Lyap}
\Lfn(X) = \Capprox(X,\hat{Y})
\EEQA
From (\ref{eqn:F_potential})
\BEQA
\D{\Lfn(X)}{x_f^p} = \D{\Capprox(X,\hat{Y})}{x_f^p} = F_f^p(X,\hat{Y}).
\EEQA Hence,
\BEQA
\Ldot(X) &= \sum_{f=1}^F \sum_{p=1}^{S_f} \D{\Lfn(X)}{x_f^p}\dot{x}_f^p \\
&= \sum_{f=1}^F \sum_{p=1}^{S_f} F_f^p(X,\hat{Y})\dot{x}_f^p
\EEQA
%
From (\ref{eqn:bnn}) we can substitute the value for $\dot{x}_f^p$ and we have
\BEQA \nonumber
\Ldot(X) = \sum_{f=1}^F \sum_{p=1}^{S_f} F_f^p (x_f\gamma_f^p - x^p_f\sum_{j=1}^{S_f}\gamma_f^j) \\
= \sum_{f=1}^{F}x_f\left(\sum_{p=1}^{S_f}F_f^P\gamma_f^p - \left(\frac{1}{x_f}\sum_{p=1}^{S_f}F_f^px_f^p\right)\sum_{j=1}^{S_f}\gamma_f^j\right)
\EEQA
We define
\BEQA 
\bar{F}_f \triangleq \frac{1}{x_f}\sum_{p=1}^{S_f}F_f^px_f^p \nonumber \hspace{1in}\\
\implies \sum_{f=1}^{F}x_f\left(\sum_{p=1}^{S_f}F_f^P\gamma_f^p - \sum_{j=1}^{S_f}\bar{F}_f\gamma_f^j\right) \\
= \sum_{f=1}^{F}x_f\left(\sum_{p=1}^{S_f}\gamma_f^p (F_f^P - \bar{F}_f)\right) \quad  \quad  \quad\\
\leq - \sum_{f=1}^Fx_f\left(\sum_{p=1}^{S_f}(\gamma_f^p)^2\right) \leq 0 \hspace{0.5in}
\EEQA
Thus,
\BEQA
\Ldot(X) & \leq & 0, \quad \forall\  X \in \mathcal{X}
\EEQA
where equality exists when the state $X$ corresponds to the stationary point of BNN dynamics.
Hence, the system is globally asymptotically stable.
\end{proof}

\subsection{Efficiency}

 The objective of our system is to minimize the total cost for a given load vector $\vec{x} = [x_1, \dots, x_Q]$ and given hyper-link state $\hat{Y}$. Here the total cost in the system is $\Capprox(X,\hat{Y})$ and is defined in (\ref{eqn:totCost}). This can be represented as the following constrained minimization problem:
\BEQA\label{eqn:sysPrimal}
\min_{X} \Capprox(X,\hat{Y}) \\
\mbox{subject to: }\nonumber\\
\sum_{p=1}^{S_i}x_i^p &=& x_i \quad \forall \ i \in \mathcal{F}\label{eqn_sysConstraints} \\
 x_i^p &\geq & 0. \nonumber
\EEQA
The Lagrange dual associated with the above minimization problem, for a given $\hat{Y}$ is
\newcommand{\Lagr}{\mathcal{L}_{\hat{Y}}}
\BEQA\label{eqn:sysDual}
\Lagr(\lambda,h,X) = \max_{\lambda, h}\min_{X}\bigg(\Capprox(X,\hat{Y}) &-& \\
\sum_{i=1}^F \lambda _i\Big(\sum_{p=1}^{S_i}x_i^p -x_i\Big) &-& \sum_{i=1}^F \sum_{p=1}^{S_i} h_i^p x_i^p \bigg) \nonumber
\EEQA
where $\lambda_i$ and $h_p^i \geq 0$ , $\forall \ i \in \mathcal{F}$ and $p \in \mathcal{S}_i$, are the dual variables.  Now the above dual problem gives the following Karush-Kuhn-Tucker first order conditions:
\BEQA
\D{\Lagr}{x_i^p}(\lambda,h, X^\star) = 0 &  \forall \ i \in \mathcal{F} \mbox{ and }p \in \mathcal{S}_i  \label{eqn:sysKKT} \\
\mbox{and} \nonumber \\ 
h_i^p x_i^{\star p} = 0 & \forall\  i \in \mathcal{F}\mbox{ and }p \in \mathcal{S}_i\label{eqn:KKT2}
\EEQA
where $X^\star$ is the global minimum for the primal problem (\ref{eqn:sysPrimal}).  Hence from (\ref{eqn:sysKKT}) we have, 
$\forall\ i \in \mathcal{F}$ and $\forall\ p \in \mathcal{S}_i$,
\BEQA \nonumber
\D{\Capprox}{x_i^p}(X^\star,\hat{Y}) - \lambda_i \D{(\sum_{p=1}^{S_i} x^{\star p}_i - x^\star i)}{x_i^p} + h_i^p\ = \ 0 \\
\Rightarrow \D{\Capprox}{x_i^p}(X^\star,\hat{Y}) \ = \ \lambda_i + h_i^p \label{eqn:KKT} \\
\Rightarrow F_i^p(X^\star,\hat{Y})\ = \ \lambda_i + h_i^p
\EEQA
where the last equation follows from the definition of potential games (\ref{defn:potentialGame}).\\
From (\ref{eqn:KKT2}), it follows that 
\BEQA
F_i^p(X^\star,\hat{Y})\ = \ \lambda_i & \mbox{ when } x_i^{\star p} > 0 \label{eqn:wdropEq1}\\
\mbox{and} \nonumber \\
F_i^p(X^\star,\hat{Y})  \ = \ \lambda_i + h_i^p & \mbox{ when } x_i^{\star p} = 0 \label{eqn:wdropEq2}
\EEQA
$\forall\ i \in \mathcal{F}$ and $\forall\ p \in \mathcal{S}_i$. The above condition 
(\ref{eqn:wdropEq1}, \ref{eqn:wdropEq2}), implies that the payoff on all the options used is identical and for options not in use the payoff is more, which is equivalent to the definition of Wardrop equilibrium (\ref{defn:wardrop}). Notice, we use a modified definition of Wardrop equilibrium, since each source tries to minimize it's cost (or payoff).

We have the following theorem that proves the efficiency of our system.
\begin{theorem}
The solution of the minimization problem in (\ref{eqn:sysPrimal}) is identical to the Wardrop equilibrium of the non-cooperative potential game $\mathcal{G}$.
\end{theorem}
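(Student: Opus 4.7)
The strategy is to exploit the KKT calculation already carried out in equations (\ref{eqn:KKT})--(\ref{eqn:wdropEq2}). Most of the work is done there: what remains is to argue that the KKT conditions are both necessary and sufficient for global optimality of (\ref{eqn:sysPrimal}), and to establish the reverse implication from a Wardrop equilibrium back to a minimizer. First I would verify that $\Capprox(X,\hat{Y})$ is convex in $X$. The uncoded cost $\sum_{i,p}\beta_i^p x_i^p$ is linear. Each rebate $\tilde{T}(h)$ is an affine combination of generalized $r$-means $M_r(x_i^p,y_h)$, and for $r<1$ on the positive orthant $M_r$ is concave, so $-\tilde{T}(h)$ is convex and $\Capprox$ is a sum of convex functions. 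Since the feasible set defined by (\ref{eqn_sysConstraints}) is a polyhedron, Slater's condition holds trivially and KKT is necessary and sufficient for a global minimizer.

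With convexity in hand, the KKT derivation in the paper shows that any optimizer $X^\star$ satisfies (\ref{eqn:wdropEq1})--(\ref{eqn:wdropEq2}), which I identify with Definition~\ref{defn:wardrop} adapted to cost minimization: every used hyper-path of flow $i$ yields the same marginal cost $\lambda_i$, and every unused hyper-path yields marginal cost $\lambda_i+h_i^p\geq\lambda_i$. Thus every primal optimum is a Wardrop equilibrium of $\mathcal{G}$. For the converse, given a Wardrop equilibrium $\hat{X}$, let $\lambda_i$ denote the common marginal cost $F_i^p(\hat{X},\hat{Y})$ on used hyper-paths of flow $i$, and set $h_i^p=0$ on used paths and $h_i^p=F_i^p(\hat{X},\hat{Y})-\lambda_i$ on unused paths. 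The Wardrop property (in its cost-minimizing form) gives $h_i^p\geq 0$, and by construction these multipliers satisfy stationarity (\ref{eqn:KKT}), complementary slackness (\ref{eqn:KKT2}), and primal/dual feasibility. Since $\Capprox(\cdot,\hat{Y})$ is a potential function for $\mathcal{G}$ by (\ref{eqn:F_potential}), the KKT stationarity conditions for the Lagrangian (\ref{eqn:sysDual}) translate directly into the Wardrop conditions in marginal-cost form. Hence $\hat{X}$ is a KKT point and, by convexity, a global minimizer of (\ref{eqn:sysPrimal}).

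The main obstacle I anticipate is the convexity check. The $r$-mean is concave on the strictly positive orthant for $r\leq 1$ (including the negative $r$ used here to approximate $\min$), but the feasible region allows $x_i^p=0$ on the boundary, where $M_r$ with $r<0$ must be interpreted by continuous extension. A standard limiting argument, or a small smoothing of the $r$-mean away from zero, handles this without altering the KKT correspondence. Aside from that technicality, the result reduces to the now-standard identification of Wardrop equilibria with minimizers of the potential function in a convex potential game.
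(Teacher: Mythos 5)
Your proof is correct and follows the same core route as the paper's: identifying the KKT first-order conditions of the convex potential-minimization problem (\ref{eqn:sysPrimal}) with the Wardrop conditions via the potential-function relation (\ref{eqn:F_potential}), and invoking convexity to rule out a duality gap. You are in fact somewhat more complete than the paper — you explicitly verify the convexity of $\Capprox$ from the concavity of the $r$-mean (which the paper merely asserts, and which is the load-bearing step) and you establish both inclusions between minimizers and Wardrop equilibria, whereas the paper detours through the stationary points of the BNN dynamics and leaves the converse direction implicit.
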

\begin{proof}
Consider the BNN dynamics (\ref{eqn:bnn}), at stationary point, $\tilde{X}$, we have $\dot{x}_i^p = 0$, which implies that either,
\BEQA
&\hat{F}_i = \  F_i^p(\tilde{X},\hat{Y}) \label{eqn_statPoint} \\
\mbox{or } &\hat{x}_i^p = 0,\nonumber \\
\mbox{where, }&\hat{F}_i \triangleq \frac{1}{\hat{x}_i}\sum_{r=1}^Q\hat{x}_i^r F_i^r(\tilde{X},\hat{Y}) \quad \forall\ i \in \mathcal{F}, \label{eqn_Favg}
\EEQA
The above expressions imply that, all the hyper-paths used by a particular flow, $i \in \mathcal{F}$, will yield same payoff, $\hat{F}_i$, while hyper-paths not used ($x_i^p = 0$) would yield a payoff higher than $\hat{F}_i$.

We observe that the conditions required for Wardrop equilibrium are identical to the KKT first order conditions (\ref{eqn:wdropEq1})-(\ref{eqn:wdropEq2}) of the minimization problem (\ref{eqn:sysPrimal}) when
 \BEQA
\hat{F}_i = \lambda_i \quad \forall\ i \in \mathcal{F} \nonumber
\EEQA
It follows from the convexity of the total system cost that, there is no duality-gap between the primal (\ref{eqn:sysPrimal}) and the dual (\ref{eqn:sysDual}) problems. Thus, the optimal primal solution is equal to optimal dual solutions, which is identical to the Wardrop equilibrium.
\end{proof}
\section{Hyper-Link Capacity Control}\label{sec:hlCtrl}
Thus far we have designed a distributed scheme that would result in minimum cost for a given hyper-link state or capacities $Y$ and for a given load vector $\vec{x} = \{x_1,\dots,x_f\}$. 
In this phase of Decoupled Dynamics, the hyper-link capacities are adjusted based on the current system cost so as to guarantee a minimum total system cost for a given load vector $\vec{x}$. This phase runs at a larger time-scale as compared to the traffic splitting phase described in Section~\ref{sec:mpncGame}. It is assumed that during this phase all the flows or players remain in equilibrium, i.e., changing the hyper-link capacities would force all the source nodes to attain Wardrop equilibrium instantaneously. 

The hyper-link capacity control can be formulated as a centralized convex optimization problem as follows:
\BEQA 
\min_{Y}& H(Y)\label{eqn:hlCtrl_CnvxOpt}\\
\mbox{subject to, } & y_h \geq 0 \ \forall y_h \in Y \mbox{ and }\forall h \in \mathcal{H}\nonumber
\EEQA
where, $H(Y)$ is the minimum total cost of the system for a given hyper-link state $Y$, i.e., $H(Y) = \Capprox(X^\star,Y)$, where, for a given $Y$, $X^\star$ is an optimal state of the flows that results in minimum cost.\footnote{Notice, there could be many different states, $X^\star$, which result in a minimum cost but the minimum value, $\Capprox(X^\star,Y)$, is unique.}
We use a simple gradient controller defined below:
\BEQA\label{eqn:HLControl}
\dot{y}_h = -\kappa \D{H(Y)}{y_h}\ \forall y_h \in Y 
\EEQA
The partial derivative, $\D{H(Y)}{y_h}$, is over the variables $y_h \in Y$. Changing the hyper-link capacity $y_h$, of some hyper-link $h\in \mathcal{H}$, would result in a different state of the flows, $X_h^\star$ and hence a different minimum cost, $C(X_h^\star,Y_h)$, where $Y_h$ corresponds to the changed hyper-link capacity of $y_h$ while other capacities are fixed, as compared to $Y$. Thus for a hyper-link, $h=n_k[(f_i^p,n_i),(f_j^q,n_j)]$ with capacity $y_h$,
\BEQA
\D{H(Y)}{y_h} &= \D{\Capprox}{y_h}(X^\star,Y) + \sum_{i=1}^F\sum_{p=1}^{S_i} \D{\Capprox}{x_i^p}(X^\star,Y)\D{x_i^p}{y_h} \\
&= \D{\Capprox}{y_h}(X^\star,Y) + \sum_{i=1}^F F_i\sum_{p=1}^{S_i}\D{x_i^p}{y_h} \nonumber
\EEQA
where, the last expression follows from the definition of $F_i^p$ (\ref{eqn:F_potential}) and the fact that, for changes in the hyper-link state, the sources attain Wardrop equilibrium instantaneously. In other words, before and after a small change in $y_h$ the system is in Wardrop equilibrium. Hence, $F_i^p = F_i\ \forall i \in \mathcal{F}$ and $\forall p \in \mathcal{S}_i$.
Finally, $\sum_{p=1}^{S_i}\D{x_i^p}{y_h} = 0$ since, the total load $x_i = \sum_{p=1}^{S_i}x_i^p$ is fixed.
\BEQA\label{eqn:ctrlPrimal}
\Rightarrow \D{H(Y)}{y_h} &= \D{\Capprox}{y_h}(X^\star,Y) = -\D{\tilde{T}}{y_h}(h)
\EEQA
where, from (\ref{eqn:Tapprox}), for hyper-link $h=n_k[(f_i^p,n_i),(f_j^q,n_j)]$,
\BEQA
\D{\tilde{T}}{y_h}(h) &=\frac{\alpha_{ki}}{2}\left(\frac{y_h}{M_r(x_i^p,y_h)}\right)^{r-1} + \frac{\alpha_{kj}}{2}\left(\frac{y_h}{M_r(x_j^q,y_h)}\right)^{r-1} \nonumber\\
&- \max\{\alpha_{ki},\alpha_{kj} \} \nonumber\\
\mbox{Recall, }& M_r(x_i^p,y_h) = \left(\frac{(x_i^p)^r +(y_h)^r}{2}\right)^{\frac{1}{r}} \nonumber
\EEQA

\begin{theorem}
At the large time-scale, the hyper-link capacity control with dynamics (\ref{eqn:ctrlPrimal}) is globally asymptotically stable.
\end{theorem}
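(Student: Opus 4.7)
The plan is to use the value function $H(Y) = \tilde{C}(X^\star(Y), Y)$ itself, shifted by its minimum, as a Lyapunov function for the gradient flow $\dot{y}_h = -\kappa\,\partial H/\partial y_h$. The three ingredients I need are: (i) $H$ is continuously differentiable in $Y$ with the gradient formula already derived in (\ref{eqn:ctrlPrimal}); (ii) $H$ is convex in $Y$, so that any stationary point is a global minimum and the sublevel sets are compact after restriction to a bounded region; and (iii) along the dynamics, $\dot H \le 0$ with equality exactly at the set of minimizers.

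First I would verify convexity of $H$. The inner cost $\tilde{C}(X,Y)$ is a sum of linear terms in $X$ minus a sum of $\tilde T(h)$ terms, and each $\tilde T(h)$ is a nonnegative combination of generalized power means $M_r(x_i^p,y_h)$ plus a linear term in $y_h$. For $r<0$ (which is the regime relevant to approximating $\min$), $M_r$ is a concave function on the positive orthant, so $-M_r$ is convex; consequently each $-\tilde T(h)$ is convex jointly in $(x_i^p,y_h)$, and $\tilde{C}(X,Y)$ is jointly convex in $(X,Y)$. Since the feasible set for $X$ (given $Y$) is defined by the linear equalities $\sum_p x_i^p = x_i$ and linear inequalities $x_i^p\ge 0$, partial minimization over $X$ preserves convexity, so $H(Y)=\min_X \tilde C(X,Y)$ is convex in $Y\ge 0$.

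Next I would justify differentiability and the gradient formula. By the potential-game argument of the preceding theorem, for each $Y$ the inner game converges to the set of minimizers of $\tilde C(\cdot, Y)$; picking any selector $X^\star(Y)$, Danskin's envelope theorem gives $\partial H/\partial y_h = \partial \tilde C/\partial y_h\,(X^\star(Y),Y)$, which is exactly (\ref{eqn:ctrlPrimal}); the value of this partial derivative is independent of which minimizer is selected because the KKT multipliers $\lambda_i$ pin down the payoffs uniquely on used paths, and the $y_h$-derivatives of $\tilde C$ depend only on $(x_i^p,x_j^q)$ at the hyper-link, which again are determined by the Wardrop conditions up to a set that does not affect $\partial \tilde C/\partial y_h$.

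With those two facts, the Lyapunov step is short. Let $V(Y)=H(Y)-H(Y^\star)\ge 0$ where $Y^\star$ attains the minimum (existence follows by continuity and coercivity of $H$ on the nonnegative orthant; if needed, one restricts to a compact sublevel set, which is forward-invariant because $\dot V\le 0$). Then
\begin{equation*}
\dot V(Y)=\sum_h \D{H}{y_h}\dot y_h = -\kappa\sum_h\left(\D{H}{y_h}\right)^2 \le 0,
\end{equation*}
with equality only on the set $\{Y:\nabla H(Y)=0\}$. By convexity of $H$, this set equals the set of global minimizers, and LaSalle's invariance principle then forces every trajectory to converge to it. Since all minimizers share the same value $H(Y^\star)$ (unique by convexity in $H$ even if the minimizing $Y$ is not), the system is globally asymptotically stable at the optimum value, establishing the claim.

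The main obstacle I expect is the envelope/differentiability step: $X^\star(Y)$ need not be a singleton because $\tilde C$ is only convex (not strictly convex) in $X$, so one has to argue carefully that $\partial\tilde C/\partial y_h(X^\star(Y),Y)$ is well-defined as a function of $Y$ and continuous, to legitimately invoke gradient-flow convergence. If strict convexity cannot be assumed, the cleanest fix is to replace the classical gradient argument with a subdifferential argument: $H$ is convex, hence locally Lipschitz, and the descent inequality $H(Y(t_2))-H(Y(t_1))\le -\kappa\int_{t_1}^{t_2}\|\partial H\|^2\,ds$ together with boundedness of trajectories yields the same LaSalle-type conclusion.
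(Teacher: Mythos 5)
Your proposal is correct and follows essentially the same route as the paper: both use the value function $H(Y)$, shifted by its minimum, as a Lyapunov function for the gradient flow and conclude $\dot V\le 0$ with equality only at stationary points, which by convexity are the global minimizers. You additionally fill in steps the paper leaves implicit (joint convexity of $\tilde C$ via concavity of $M_r$ for $r<0$ and preservation of convexity under partial minimization, Danskin's theorem for the differentiability of $H$ and the validity of equation~(\ref{eqn:ctrlPrimal}) when $X^\star(Y)$ is not unique, and a LaSalle argument), and your computation $\dot V=-\kappa\sum_h\left(\frac{\partial H}{\partial y_h}\right)^2\le 0$ is the correct form of the paper's displayed derivative, which as printed ($\dot Z=\sum_h \dot y_h^2/y_h\le 0$) has a sign inconsistency.
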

\begin{proof}
We use the following Lyapunov function
\vspace{-0.1in}
\BEQA
 Z(Y) &=&  V(Y)- V(\hat{Y}) \label{eqn:admCtrlLyap} \\
\mbox{where } V(Y)&=& \kappa H(Y) \label{eqn:admCtrlV}
\EEQA
which is strictly convex, with $\hat{Y}$ is the hyper-link state which results in minimum cost $H(Y)$.  Differentiating $Z(Y)$ we obtain
\vspace{-0.2in}
\BEQA\label{eqn:Zdot}
\dot{Z} = \sum_{h=1}^H \D {V}{y_h}\dot{y_h}.
\EEQA
Then from (\ref{eqn:admCtrlV}) and (\ref{eqn:ctrlPrimal})
\BEQA
\D {V}{y_h} = \kappa\D{H(Y)}{y_h}\dot{y}_h &= \frac{\dot{y}_h}{y_h} \\
\therefore\ \dot{Z} = \sum_{h=1}^F \frac{\dot{y}_h^2}{y_h} \leq 0 & \forall\ Y,
\EEQA
with $\dot{Z} = 0$ at the stationary points of the system. Thus, the system is globally asymptotically stable \cite{khalil96}.
\end{proof}

Finally, it's not hard to show that the equilibrium conditions of the controller (\ref{eqn:ctrlPrimal}) are the same as the KKT conditions of the optimization problem (\ref{eqn:hlCtrl_CnvxOpt}).  Hence, the controller succeeds in minimizing the total cost of the system for a given load $\vec{x}$ into the system.  Thus, the system state converges to a local minimum.  Now, since the global cost minimization problem under the $\min$ approximation is convex, the solution is the global minimum of the relaxed problem. 

\section{Simulations}
We simulated our system in Matlab to show system convergence.  We first performed our simulations for our simple network shown in Figure~\ref{fig:codingOpportunities}. The load at the source nodes $1$, $2$ and $3$ is given as $4.73$, $2.69$ and $3.56$ respectively. We use the following costs on the individual links ($\alpha_{ij}$):
 $\alpha_{12} = 2.8$, $\alpha_{23} = 1.6$, $\alpha_{34} = 1.8$, $\alpha_{25} = 1.3$, $\alpha_{54}= 2.1$, $\alpha_{26} = 1.7$, $\alpha_{48} = 2.9$, $\alpha_{86} = 2.2$, $\alpha_{57} = 1.9$, $\alpha_{71}= 2.6$; we assume the costs on the links are symmetric. We use the approximated cost function (\ref{eqn:Capprox}), with a value of $r = -100$  for the approximation parameter (\ref{eqn:minApprox}) for our simulations. The simulation is run for $50$ large time scale units, and in each large time scale we have $20$ small time units. 

We compare the total cost of the system for the following scenarios:
\begin{enumerate}
\item Decoupled Dynamics: This is the algorithm that we developed; we use our hyper-links to decouple the flows that participate in coding.
\item Coupled Dynamics (no hyper-link): Here, there is coupling between individual flows and coding happens at the minimum rate of the constituent flows. We use similar game dynamics as that was used in DD. The total cost is specified in Equation~(\ref{equ:energy}).
\item No Coding: In this system no network coding is used.
\item LP Optimal: This is a centralized solution. We formulated our system as a Linear Program (LP)  of minimizing cost (\ref{eqn:totCost}) over $X$ and $Y$ for a given load vector  that we obtain using an LP-solver. 
\end{enumerate} 

As seen in the Figure~\ref{fig:costComparison}, the total cost of the system (number of transmissions per unit rate) for our model (decouples using hyper-link) is close to the optimal solution obtained by solving it in a centralized fashion.
\begin{figure}[htbp]
\vspace{-0.1in}
\begin{center}
\includegraphics[width=3in]{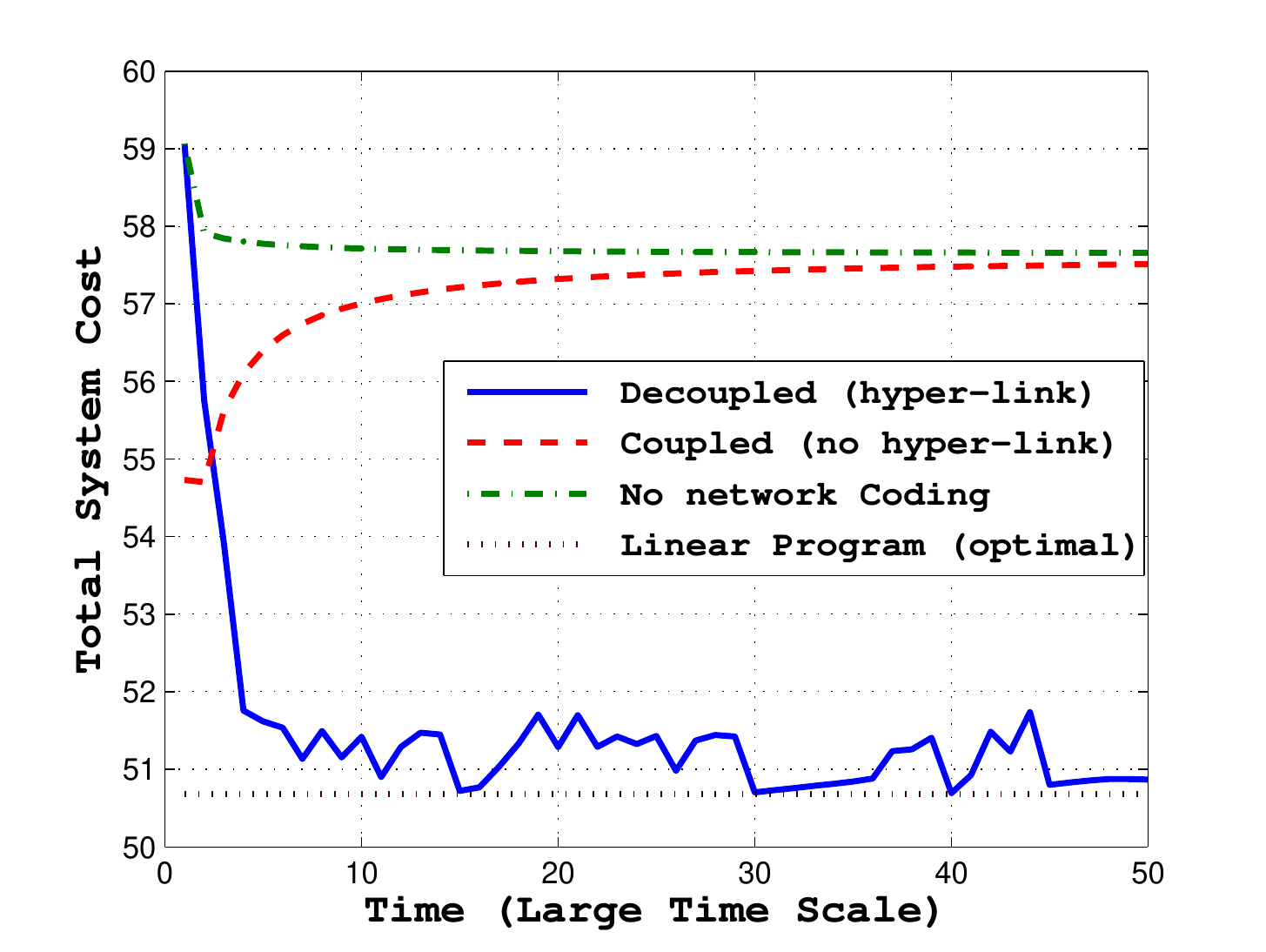}
\figCaptionSpace
\caption{Comparison of total system cost (per unit rate), for different systems: DD, CD and non-coded against LP.} 
\label{fig:costComparison}
\end{center}
\vspace{-0.2in}
\end{figure}

For this simple network we compared the final system state of DD and CD with that of the solution obtained using LP. We observe from  Table \ref{tab:bigTopoCostComprison} that the values for the split ($X$) and the hyper-link capacities ($Y$) generated by DD are near-optimal, but CD is very different.
\begin{table}
\begin{center}
\begin{tabular}{||c|c|c|c|c|c|c|c|c||}
\hline
Variable & $x_1^1$ & $x_1^2$ & $x_2^1$ & $x_2^2$ & $x_3^1$ & $x_3^2$ & $y_2$ & $y_3$ \\
\hline
LP  & 2.69 & 2.04 & 2.69 & 0.00 & 0.00 & 3.56 & 3.56 & 2.69 \\
DD  & 2.37 & 3.35 & 2.67 & 0.01 & 0.02 & 3.54 & 3.24 & 2.49 \\
 CD  & 4.70 & 0.02 & 0.07 & 2.61 & 0.03 & 3.52 &  NA  & NA  \\
\hline
\end{tabular}
\caption{Comparison of state variables for LP and DD and CD.}
\label{tab:resComparison}
\end{center}
\vspace{-0.5in}
\end{table}

Next, we perform our simulations on a bigger topology shown in Figure~\ref{fig:bigTopo}. This network consists of $30$ nodes shared by 6 flows. Flows $1$, $2$, $3$ and $6$ have two hyper-paths and flows $4$ and $5$ have three hyper-paths. 
There are $6$ hyper-links in the system.
\begin{figure}[t]
\vspace{-0.2in}
\begin{center}
\includegraphics[width=2.9in]{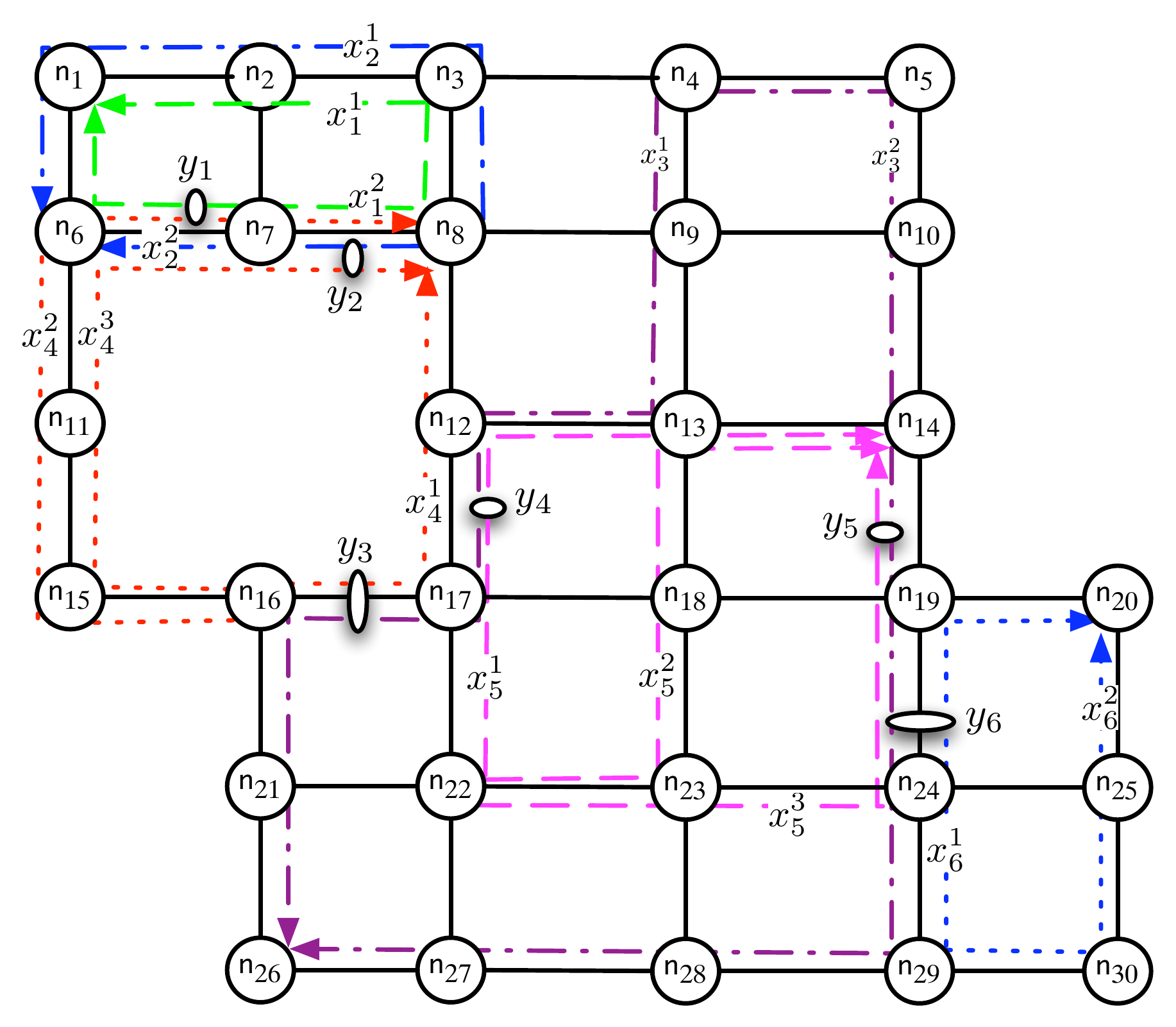}
\figCaptionSpace
\caption{Network topology.}
\label{fig:bigTopo}
\end{center}
\figSpace
\end{figure}
Our results are shown in Table~\ref{tab:bigTopoCostComprison}. We observe that DD performs near-optimally and significantly ouperforms CD in terms of total cost.
\begin{table}
\begin{center}
\begin{tabular}{||c|c|c|c||}
\hline
case & LP & DD & CD \\
\hline
1 &	1293.3437 & 1298.3194 &	1325.0618 \\
2 &	1550.4593 & 1563.7340 &	1625.06315 \\
3 &	1624.6393 &	1638.4021 & 1642.8801 \\
4 &	1826.8595 &	1837.1489 & 1865.6998 \\
\hline
\end{tabular}
\caption{Total system cost comparison of Decoupled dynamics(DD) and Coupled dynamics (CD) against the LP solution.}
\label{tab:bigTopoCostComprison}
\end{center}
\vspace{-0.4in}
\end{table}

\section{Conclusion}

We consider a wireless network with given costs on arcs,  
traffic matrix and multiple paths. The objective is to find the splits of traffic for each source across its multiple paths in
a distributed manner leveraging the reverse carpooling technique. For 
this we relax to problem into two sub-problems, and propose a two-level distributed control scheme set up as a game 
between the sources and the hyperlink nodes. On one level, 
given a set of hyperlink capacities, the sources selfishly choose their 
splits and attain a Wardrop equilibrium. On the other level, given the
traffic splits, the hyperlinks may slightly increase or decrease their
capacities using a steepest descent algorithm. We construct a Lyapunov 
function argument to show that this process asymptotically converges, although performed selfishly in a distributed fashion.

We performed several numerical studies and found that our two-level 
controller converges fast to the optimal solutions. 
Some of the bi-products of 
our experiments were that: more expensive paths before 
network coding became cheaper and shortest paths were not necessarily
optimal. In conclusion, from a methodological standpoint we have a
distributed controller that achieves a near-optimal solution when the
individuals are self-interested. 
\bibliographystyle{IEEEtran}
\bibliography{IEEEabrv,netCode}  

%
%

\end{document}